\theoremstyle{plain}
\newtheorem{thm}{Theorem}[section]
\newtheorem{theorem}[thm]{Theorem}
\newtheorem{conjecture}[thm]{Conjecture}
\newtheorem{lemma}[thm]{Lemma}
\newtheorem{example}[thm]{Example}
\newtheorem{fct}[thm]{Fact}
\newtheorem{claim}[thm]{Claim}
\newtheorem*{claim*}{Claim}
\theoremstyle{remark}
\newtheorem{remark}[thm]{Remark}
\def\sign{\mathrm{sign}} 
\def\S{\mathcal{S}} 
\def\F{\mathbb{F}} 
\def\R{\mathbb{R}} 
\def\N{\mathcal{N}} 
\def\1{\mathbf{1}} 
\def\0{\mathbf{0}}
\DeclareMathOperator{\CC}{\mathrm{CC}}
\DeclareMathOperator{\PDT}{\mathrm{PDT}}
\DeclareMathOperator{\poly}{poly}
\DeclareMathOperator{\polylog}{polylog}
 \newcommand{\arXiv}{ar\kern-0.5ex\raisebox{0.5ex}{X}iv}
\newcommand{\supp}{\mathrm{supp}}
\title{Refuting approaches to the log-rank conjecture for XOR functions}
\author{
Hamed Hatami\thanks{School of Computer Science, McGill University. Email: \texttt{hatami@cs.mcgill.ca}}
\and
Kaave Hosseini\thanks{Department of Computer Science, University of Rochester. Email: \texttt{kaave.hosseini@rochester.edu}}
\and
Shachar Lovett\thanks{Department of Computer Science and Engineering, UC San Diego. Research supported by NSF awards 1953928 and 2006443, and a Simons investigator award. Email: \texttt{slovett@ucsd.edu}}
\and
Anthony Ostuni\thanks{Department of Computer Science and Engineering, UC San Diego. Research supported by NSF award 2006443. Email: \texttt{aostuni@ucsd.edu}}
}
\date{}
\begin{document}

\maketitle

\begin{abstract}
    The log-rank conjecture, a longstanding problem in communication complexity, has persistently eluded resolution for decades.   Consequently, some recent efforts have focused on potential approaches for establishing the conjecture in the special case of XOR functions, where the communication matrix is lifted from a boolean function, and the rank of the matrix equals the Fourier sparsity of the function, which is the number of its nonzero Fourier coefficients.	
	
    In this note, we refute two conjectures. The first has origins in Montanaro and Osborne (arXiv'09) and is considered in Tsang, Wong, Xie, and Zhang (FOCS'13), and the second is due to Mande and Sanyal (FSTTCS'20). These conjectures were proposed in order to improve the best-known bound of Lovett (STOC'14) regarding the log-rank conjecture in the special case of XOR functions. Both conjectures speculate that the set of nonzero Fourier coefficients of the boolean function has some strong additive structure. We refute these conjectures by constructing two specific boolean functions tailored to each.
\end{abstract}

\section{Introduction}
The study of communication complexity seeks to determine the inherent amount of communication between multiple parties required to complete a computational task. Arguably, the most outstanding conjecture in the field is the \textit{log-rank conjecture} of Lov\'{a}sz and Saks \cite{lovasz1993communication}. They suggest that the (deterministic) communication complexity of a two-party boolean function is upper bounded by the matrix rank over $\R$. More precisely, 
\begin{conjecture}[Log-rank conjecture \cite{lovasz1993communication}]\label{conj:LRC}
    Let $f : X \times Y \to \{-1,1\}$ be an arbitrary two-party boolean function. Then,
    \[
    \CC(f) \le \polylog(\rank(f)),
    \]
    where $\CC(f)$ is the communication complexity of $f$ and $\rank(f)$ is the rank over $\R$ of the corresponding boolean matrix.
\end{conjecture}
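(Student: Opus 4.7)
The plan is to follow the classical reduction from communication protocols to large monochromatic rectangles, and then attack the rectangle problem via the low-rank structure of the matrix. Specifically, I would prove the equivalent statement: any $\pm 1$ matrix $M$ of rank $r$ contains a monochromatic combinatorial rectangle of relative size at least $2^{-\polylog(r)}$. Assuming such a ``large rectangle lemma,'' a protocol of cost $\polylog(r)$ follows by recursion: Alice and Bob use the lemma to identify a rectangle $A \times B$; one of them announces whether $(x,y) \in A \times B$, and in the ``yes'' case the value is determined while in the ``no'' case the remaining submatrix still has rank at most $r$ but is smaller by a factor of $1 - 2^{-\polylog(r)}$, so the depth of the recursion is $2^{\polylog(r)}$ times a logarithmic factor — which I would tighten to $\polylog(r)$ by a more careful accounting (separating the two ``sides'' that shrink and bounding the total work via a potential argument on $\log|X| + \log|Y|$).

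To produce the large monochromatic rectangle, I would begin from a low-rank decomposition $M = \sum_{i=1}^r u_i v_i^{\top}$ and use a discrepancy/bias argument: since $\|M\|_F^2 \le r \cdot |X||Y|$, there must exist a rectangle on which the normalized sum of entries is bounded away from zero, and therefore one of the two signs occupies a constant fraction. The standard implementation of this idea yields Lovett's bound of $O(\sqrt{r}\,\log r)$ on communication, via a rectangle of density $2^{-O(\sqrt{r})}$. To push below $\sqrt{r}$, my plan would be to iterate: use the initial biased rectangle to restrict $M$ to a submatrix whose rank drops substantially (not merely whose dimensions shrink), exploiting the fact that an $\varepsilon$-biased rectangle forces many rows/columns to align with a lower-dimensional subspace of the column/row span. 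Roughly, after each iteration the effective rank should decrease by a geometric factor while the density loss per step is at most $2^{-\polylog(r)}$, yielding the desired polylogarithmic bound after $O(\log r)$ rounds.

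The principal obstacle — and the reason this conjecture has resisted decades of attack — is precisely this ``rank drop'' step. Current tools only give a polynomial drop in dimension but not in rank after finding a biased rectangle; in fact the Fourier-sparsity specialization studied in the rest of this paper is an attempt to sidestep exactly this barrier via additive-combinatorial structure on the support of $\widehat{f}$. A genuine attack therefore likely requires either (i) a new structural theorem showing that the column span of a $\pm 1$ low-rank matrix admits a much richer lattice of low-rank ``sub-spans'' than generic low-rank matrices do, or (ii) a direct argument bypassing rectangles entirely, perhaps via an information-theoretic protocol that uses the spectral structure of $M$ rather than its combinatorial rectangle cover. I would expect the bulk of the work — and the point at which the plan either succeeds or identifies a concrete new obstruction — to lie in establishing such a rank-drop lemma, and I would first test any candidate lemma on the very XOR-function examples constructed later in this paper, since these are precisely the instances designed to defeat the most natural structural strengthenings.
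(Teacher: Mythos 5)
This statement is the Log-rank conjecture itself, which the paper states as \emph{an open conjecture} and does not prove; the paper's contribution is to refute two proposed \emph{approaches} to it in the XOR setting, not to establish the conjecture. So there is no proof in the paper against which to compare yours, and any purported proof should be treated with extreme skepticism.

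Your write-up is an honest sketch of the classical Nisan--Wigderson / Lovett-style attack (find a large nearly-monochromatic rectangle via discrepancy, recurse, amortize via a potential on $\log|X|+\log|Y|$), and you correctly describe how this machinery yields the known $O(\sqrt{r}\log r)$ bound. But it is not a proof, and you yourself flag the exact place it fails: there is no known ``rank-drop lemma'' showing that restricting to a biased rectangle decreases the \emph{rank} (rather than merely the dimensions) geometrically. Without such a lemma the recursion depth stays polynomial in $r$, not polylogarithmic, and the argument bottoms out at the existing $\tilde{O}(\sqrt{r})$ bounds. Everything in your second and third paragraphs after ``my plan would be to iterate'' is aspirational rather than established, so the proposal has a genuine, acknowledged gap at its core. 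This is consistent with the conjecture's status as open; a ``blind proof'' of this statement should not exist, and you were right not to claim one.
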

It is well-known that $\log(\rank(f)) \le \CC(f)$ \cite{mehlhorn1982vegas}, so a positive resolution to \cref{conj:LRC} would imply that the communication complexity of two-party boolean functions is determined by rank, up to polynomial factors.

To date, the best known bound is still exponentially far from that in \cref{conj:LRC}. Concretely, Lovett \cite{lovett2016communication} showed that $\CC(f) \le O(\sqrt{\rank(f)} \log \rank(f))$. Very recently, Sudakov and Tomon posted a preprint improving the bound to $O(\sqrt{\rank(f)})$ \cite{sudakov2023matrix}. 
In hopes of gaining further insight, many researchers have considered the special case of \textit{XOR functions}, where $f_{\oplus}(x,y) \coloneqq f(x + y)$ for a boolean function $f : \mathbb{F}_2^n \to \{-1,1\}$ \cite{montanaro2009communication, zhang2010parity, tsang2013fourier, shpilka2014structure, hatami2018structure, mande2020parity}. 

The XOR setting has several convenient properties. For example, the eigenvalues of $f_{\oplus}$ correspond to the Fourier coefficients of $f$. Thus, $\rank(f_{\oplus}) = |\supp(\widehat{f})|$, the number of nonzero coefficients in $f$'s Fourier expansion (also known as the \textit{Fourier sparsity}). Additionally, Hatami, Hosseini, and Lovett \cite{hatami2018structure} proved a polynomial equivalence between $\CC(f_{\oplus})$ and the \textit{parity decision tree} complexity of $f$, denoted $\PDT(f)$. Parity decision trees are defined similarly to standard decision trees, with the extra power that each node can query an arbitrary parity of input bits. These facts together imply that the log-rank conjecture for XOR functions can be restated as follows:
\begin{conjecture}[XOR log-rank conjecture]\label{conj:XORLRC}
    Let $f : \F_2^n \to \{-1,1\}$. Then,
    \[
    \PDT(f) \le \polylog(|\supp(\widehat{f})|).
    \]
\end{conjecture}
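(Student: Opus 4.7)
The plan is to prove \cref{conj:XORLRC} by induction on the Fourier sparsity $s := |\supp(\widehat{f})|$, constructing a parity decision tree top-down. The goal of the inductive step is an \emph{amortized progress lemma}: for every boolean $f$ of sparsity $s$ there exists a parity $\chi_\gamma$ whose two restrictions $f_b := f\mid_{\chi_\gamma = b}$ have Fourier sparsities $s_0, s_1$ satisfying $s_0 \cdot s_1 \le s^{2-\delta}$ for some absolute constant $\delta > 0$. A direct calculation shows that the recurrence $T(s) \le T(s_0) + T(s_1) + 1$ under this constraint unrolls to $T(s) \le \polylog(s)$, which is exactly the $\PDT$ depth bound demanded by \cref{conj:XORLRC}.

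To locate such a parity, I would first extract structure from $A := \supp(\widehat{f})$. The identity $f^2 \equiv 1$ forces the convolution identity $(\widehat{f} \ast \widehat{f})(\gamma) = \delta_{\gamma,0}$, imposing dense signed cancellations inside $A$ and ruling out arbitrary sparse configurations. Combined with Parseval $\sum_\alpha \widehat{f}(\alpha)^2 = 1$, this exhibits a heavy character $\gamma^\star$ with $|\widehat{f}(\gamma^\star)| \ge 1/\sqrt{s}$; Chang's lemma applied to the $\Omega(1/\sqrt{s})$-spectrum of $f$ then produces a subspace $W \subseteq \F_2^n$ of small dimension capturing the spectrally heavy part of $f$. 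Candidate queries are drawn from $W$ and its cosets.

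The key step and principal obstacle is converting this spectral structure into \emph{two-sided} sparsity reduction from a single query. A restriction to $\{\chi_\gamma = b\}$ merely folds pairs $\{\alpha, \alpha + \gamma\}$ of characters together; in the worst case the sparsity on one branch remains $s$, and the other branch offers no compensating gain. Prior attempts --- including the two conjectures refuted in the present paper --- sought to shortcut this by positing strong additive closure on $A$, and those hypotheses are now ruled out. A plausible remaining route invokes the polynomial Freiman--Ruzsa theorem of Gowers, Green, Manners, and Tao on a doubling set derived from $A$, paying a quasipolynomial loss for a covering of $A$ by cosets of a low-dimensional subspace, then designs a short block of queries using this covering to amortize progress across the resulting branches.

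Whether this actually yields the required two-sided product drop --- as opposed to merely reproducing Lovett's $\sqrt{s}$ bound --- is the crux. I expect this to be where a genuinely new ingredient is needed: the existing machinery gives at best one-sided structural control on $A$, while the recurrence above is fundamentally a statement about the product of branch sizes. Absent a new way to couple the two branches, the plan will stall precisely at the point at which \cref{conj:XORLRC} itself has been stalled for three decades, and the refutations in the present paper tighten the constraints on what such a new ingredient can look like.
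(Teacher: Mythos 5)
The statement you were given is \cref{conj:XORLRC} itself, which is an \emph{open conjecture}: the paper does not prove it, and indeed the entire purpose of the paper is to refute two proposed \emph{approaches} to it. So there is no ``paper's own proof'' to compare against, and any proof proposal should be assessed on its own terms. Your writeup is candid that it does not close the argument --- you explicitly flag the two-sided sparsity reduction as the crux and say the plan ``will stall'' there --- so as a proof it has an acknowledged gap, and the greedy-style shortcut you mention in passing is precisely the kind of hypothesis that \cref{thm:nogreedy} and \cref{thm:mainthm} rule out.

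Beyond the acknowledged gap, there is a concrete error in the step you present as settled. The amortized progress lemma you posit --- a single parity whose restrictions satisfy $s_0 \cdot s_1 \le s^{2-\delta}$ --- does \emph{not} unroll to a $\polylog(s)$ bound, whether $T$ denotes depth or tree size. The condition permits the highly unbalanced split $s_0 = s-1$, $s_1 \approx s^{1-\delta}$, and the resulting recurrence $T(s) \le T(s-1) + T(s^{1-\delta}) + 1$ already forces $T(s) \ge s-1$ by telescoping, and in fact solves to $T(s) = s^{\Theta(1/\delta)}$. If instead $T$ denotes depth via $T(s) \le \max\{T(s_0),T(s_1)\}+1$, the product bound controls the max not at all once one branch has tiny sparsity. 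What one actually needs is a \emph{uniform} drop on both branches, e.g.\ $\max\{s_0,s_1\} \le s^{1-\delta}$ or $\max\{s_0, s_1\} \le (1 - 1/\polylog(s)) \cdot s$, and producing such a query is exactly the structural statement (large common folding direction, or many of them) that the constructions in \cref{sec:one} and \cref{sec:many} show cannot be guaranteed in general. The PFR-based route you gesture at would still need to overcome this same obstruction, and your own text correctly identifies that no known mechanism does.
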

The best known bound, due to  \cite{tsang2013fourier, shpilka2014structure}, is $\PDT(f) \le O(\sqrt{|\supp(\widehat{f})}|)$, a mere $\log$-factor improvement on the general case bound by Lovett \cite{lovett2016communication}, and matched by the recent bound of Sudakov and Tomon \cite{sudakov2023matrix}.

\subsection{Folding}
\textit{Folding} is a fundamental concept in the analysis of the additive structure of a function's Fourier support. Let 
\[
\S = \supp(\widehat{f}) = \{\gamma \in \F_2^n: \widehat{f}(\gamma)\neq 0\} \quad \text{and} \quad \S + \gamma = \{s + \gamma : s \in \S\}.
\]
If $(s_1, s_2), (s_3, s_4) \in \binom{\S}{2}$ satisfy $s_1 + s_2 = s_3 + s_4 = \gamma$, we say the pairs $(s_1, s_2)$ and $(s_3, s_4)$ \textit{fold} in the direction $\gamma$. 

Analyzing folding directions is useful in constructing efficient PDTs in the context of \Cref{conj:XORLRC}. In particular, when a function $f$ is restricted according to the result of some parity query $\gamma$, all pairs of elements in $\S$ that fold in the direction $\gamma$ collapse to a single term in the restricted function $f|_{\gamma}$'s Fourier support. Iterating this process until the restricted function is constant yields a PDT whose depth depends on the number of iterations performed and, thus, on the size of the folding directions queried. Indeed, this is the general strategy used to prove the aforementioned closest result to \Cref{conj:XORLRC} \cite{tsang2013fourier, shpilka2014structure}.

\subsection{Refuting a greedy approach}
An approach dating back to \cite{montanaro2009communication} seeks to prove \cref{conj:XORLRC} through the existence of a single large folding direction. 
They conjectured that there always exists $\gamma_1, \gamma_2$ such that $|(\S + \gamma_1) \cap (\S + \gamma_2)| \ge |\mathcal{S}|/K$ for some constant $K > 1$. This yields the following $O(\log |\S|)$-rounds greedy approach: query $\gamma_1 + \gamma_2$ and consider the function restricted to the query response. This restriction decreases the Fourier sparsity by a constant factor, so the function must become constant in $O(\log |\S|)$ rounds. This implies the strong upper bound of 
\[
    \PDT(f) \le O\left(\log|\mathcal{S}|\right).
\]

However, O'Donnell, Wright, Zhao, Sun, and Tan \cite{odonnell2014composition} constructed a function with communication complexity $\Omega(\log(|\S|)^{\log_3(6)})$; hence one can not take $K$ to be a constant. Yet to prove the log-rank conjecture, it suffices to take $K = O(\polylog(|\S|))$, and this choice of $K$ remained plausible up to date.  Such an approach is mentioned in both \cite{tsang2013fourier} and \cite{mande2020parity}, and a similar approach was used to verify the log-rank conjecture for many cases of functions lifted with AND (rather than XOR) gadgets \cite{knop2021log}. We strongly refute this conjecture.
\begin{theorem}[Informal version of \Cref{thm:preciseGreedy}]\label{thm:nogreedy}
    For infinitely many $n$, there is a function $f:\F_2^n\to\{-1,1\}$ such that for $\S = \supp(\widehat{f})$, it holds
    \[
    |(\S + \gamma_1) \cap (\S + \gamma_2)| \le O\left(|\S|^{5/6}\right)
    \]
    for all distinct $\gamma_1, \gamma_2 \in \F_2^n$.
\end{theorem}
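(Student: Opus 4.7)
The plan is to construct an explicit boolean function $f : \F_2^n \to \{-1,1\}$ whose Fourier support is quantitatively Sidon-like. The first step is a clean reformulation: in characteristic $2$, one has $|(\S+\gamma_1) \cap (\S+\gamma_2)| = |\S \cap (\S+\gamma_1+\gamma_2)|$, and this quantity equals the number of pairs $(s_1, s_2) \in \S^2$ with $s_1 + s_2 = \gamma_1+\gamma_2$. So \cref{thm:nogreedy} reduces to exhibiting a boolean function whose Fourier support $\S$ has representation function $r(\delta) := |\{(s_1, s_2) \in \S^2 : s_1 + s_2 = \delta\}| \le O(|\S|^{5/6})$ for every nonzero $\delta \in \F_2^n$.

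With this reformulation in hand, I would look for algebraic candidates whose Fourier support is sparse and approximately Sidon. Natural options include: (i) bilinear constructions $f(x, y) = (-1)^{\langle x, \phi(y)\rangle}\,g(y)$ where $\phi : \F_2^m \to \F_2^m$ is an injection whose image is close to a Sidon set and $g$ is a carefully chosen auxiliary boolean function; (ii) "graph-of-a-function" constructions in which $\S$ is essentially $\{(y, \phi(y)) : y \in \F_2^m\}$ for a generic $\phi$; (iii) composition-based constructions assembled recursively from a small base function with a spread Fourier support. For each candidate, I would compute the Fourier expansion directly from the algebraic structure, identify $\S$, and bound $r(\delta)$ by splitting $\delta$ into its $x$- and $y$-components and invoking the Sidon-like property of $\phi$. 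The specific exponent $5/6$ in the bound suggests the construction is tuned to give a polynomial saving via a specific product-type or recursive numerology, possibly with a Cauchy--Schwarz step or a Weil-type character sum estimate entering the multiplicity bound.

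The main obstacle is the tension between two conflicting demands. On one hand, the $\{-1,1\}$-constraint forces strong algebraic relations among Fourier coefficients, which typically makes the support inherit additive structure (affine subspaces for quadratic phase functions, Cartesian products for separable functions, and so on); naive attempts collapse because some direction $\delta$ aligned with one factor gives representation count equal to $|\S|$ itself. On the other hand, we need the support to be almost Sidon, an inherently anti-structural property. The heart of the proof will thus be finding an explicit construction that navigates this tension — likely one that \emph{entangles} coordinates rather than separating them. Once such a candidate is in place, verifying the $O(|\S|^{5/6})$ bound should follow from a direct combinatorial case analysis on $\delta$, together with the multiplicity estimates coming from the Sidon structure of the underlying algebraic ingredient.
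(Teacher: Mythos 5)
Your opening reformulation is correct and matches what the paper does implicitly: $|(\S+\gamma_1)\cap(\S+\gamma_2)| = |\S\cap(\S+\gamma_1+\gamma_2)|$, so the task is to exhibit a boolean function whose Fourier support has a small representation function $r(\delta)$ for all nonzero $\delta$. You have also correctly identified the central tension — the $\{-1,1\}$ constraint pushes $\S$ toward additive structure, while the statement requires $\S$ to be nearly Sidon-like.

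However, the proposal stops exactly where the proof begins: you list three families of candidate constructions (bilinear phase with auxiliary factor, graph-of-a-function, recursive composition) and explicitly say "the heart of the proof will thus be finding an explicit construction," but none of the three are instantiated, and none of them is the one that actually works. The paper's construction is a \emph{subspace addressing} function on $\F_2^{7k}\times\F_2^{2^k}$: pick disjoint affine subspaces $A_1,\dots,A_{2^k}\subset\F_2^{7k}$ of dimension $2k$ with cosets $V_i+a_i$, and set $f(x,y)=(-1)^{y_i}$ when $x\in A_i$ and $f(x,y)=1$ otherwise. Its Fourier support is (essentially) $\bigcup_i(V_i^\perp+e_i)$, a union of $2^k$ affine subspaces of dimension $5k$ tagged by distinct unit vectors $e_i$ in the $y$-block — this is quite different from a bilinear phase or a graph construction, though it does "entangle" the blocks in the sense you anticipate. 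Moreover, the proof has a second ingredient you do not foresee: the $A_i$ cannot be chosen arbitrarily. One needs the $V_i$ to be pairwise transversal (\emph{i.e.}, $V_i\cap V_j=\{0\}$) so that $|V_i^\perp\cap(V_j^\perp+\gamma_x)|$ is exactly $2^{3k}$, and one needs every nonzero $\gamma_x$ to lie in at most $O(1)$ of the $V_i^\perp$'s; the paper establishes both by a short probabilistic argument over random $V_i$. The final bound then comes from a three-way case analysis on the Hamming weight of the $y$-part of $\gamma_1+\gamma_2$ (which is at most $2$, since $\gamma_1+\gamma_2\in\S+\S$), not from any Weil-type or Cauchy--Schwarz estimate. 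So while your reformulation and diagnosis of the difficulty are sound, the proposal is missing the construction, the probabilistic lemma that makes it work, and the case analysis — that is, all of the actual content of the proof.
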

\begin{remark}
    Observe that this theorem implies the greedy method cannot obtain a bound better than $\PDT(f)=\widetilde{O}(|\mathcal{S}|^{1/6})$. In fact, a more careful analysis can rule out bounds better than $\PDT(f) = \widetilde{O}(|\mathcal{S}|^{1/5})$ (see \Cref{rmk:improvedBounds}).
\end{remark}

The functions used in \Cref{thm:nogreedy} are a variant of the addressing function using disjoint (affine) subspaces. While we believe the specific construction is novel, the concept of using functions defined with disjoint subspaces has previously appeared in the literature in this context. Most notably, Chattopadhyay, Garg, and Sherif used XOR functions based on this idea in the pursuit of stronger counterexamples to a more general version of the log-rank conjecture \cite{chattopadhyay2020towards}.

\subsection{Refuting a randomized approach}
Rather than simply looking for a large folding direction, a recent work of Mande and Sanyal \cite{mande2020parity} attempts to address \cref{conj:XORLRC} through a deeper understanding of the additive structure of the spectrum of boolean functions. 
They proposed the following conjecture on the number of nontrivial folding directions, and showed it would yield a polynomial improvement to the state-of-the-art upper bound for the XOR log-rank conjecture via a randomized approach.

\begin{conjecture}[\cite{mande2020parity}]\label{conj:manydir}

There are constants $\alpha, \beta \in (0,1)$ such that for every boolean function $f:\mathbb{F}_2^n\to \{-1,1\}$, for $\S = \supp(\widehat{f})$, it holds
\[
\Pr_{\gamma_1,\gamma_2\in \S}\left[|(\S+\gamma_1)\cap(\S+\gamma_2)|>|\S|^\beta\right]\geq \alpha.
\]
\end{conjecture}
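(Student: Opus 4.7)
To refute Conjecture~\ref{conj:manydir}, the plan is to construct, for each target pair of constants $(\alpha,\beta) \in (0,1)^2$, a boolean function $f : \F_2^n \to \{-1,1\}$ whose Fourier support $\S$ is so additively spread that the conjectured lower bound fails. The first step is a reformulation. Setting $r_\S(\gamma) = |\{(s,s') \in \S \times \S : s+s' = \gamma\}|$, one checks that $|(\S+\gamma_1) \cap (\S+\gamma_2)| = |\S \cap (\S + \gamma_1+\gamma_2)| = r_\S(\gamma_1+\gamma_2)$, and that under the uniform measure on $\S \times \S$ the sum $\gamma_1+\gamma_2$ places mass $r_\S(\gamma)/|\S|^2$ at each $\gamma$. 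Hence the conjectured inequality is equivalent to
\[
\sum_{\gamma \,:\, r_\S(\gamma) > |\S|^\beta} r_\S(\gamma) \;\geq\; \alpha \, |\S|^2,
\]
and to refute it for the chosen $(\alpha,\beta)$ it suffices to produce a boolean $f$ with $\max_{\gamma \neq 0} r_\S(\gamma) \leq |\S|^\beta$ and $|\S|$ large enough that the trivial contribution $r_\S(0)/|\S|^2 = 1/|\S|$ from $\gamma_1 = \gamma_2$ falls below $\alpha$.

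\textbf{Construction.} I would use the same framework as in the proof of Theorem~\ref{thm:nogreedy}: a variant of the addressing function whose data blocks live in pairwise disjoint affine subspaces of $\F_2^n$, equipped with carefully chosen boolean sub-functions. Concretely, split the input into $k$ address bits $a \in \F_2^k$ and a data region partitioned into $2^k$ affinely disjoint subspaces $V_0, \dots, V_{2^k - 1}$, and set $f(a,x) = g_a(x|_{V_a})$ for boolean $g_a$. A direct Fourier computation decomposes $\supp(\widehat{f})$ into $2^k$ blocks, each a copy of $\supp(\widehat{g_a})$ embedded in its subspace and tagged by a character on the address bits, with block supports pairwise disjoint.

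\textbf{Bounding representations and iterating.} For each $\gamma \neq 0$, representations split into same-block pairs, controlled by the internal additive energy of the individual $\supp(\widehat{g_a})$'s, and cross-block pairs, controlled by the affine disjointness of the $V_a$'s, which caps the number of block-pair decompositions of any fixed $\gamma$ by $O(1)$. By choosing the inner $g_a$'s to be themselves well-spread boolean functions—either recursively applying the same construction at a smaller scale, or invoking concentration for a randomly sampled sparse boolean function—and iterating over $O(1/\beta)$ levels, the uniform bound $\max_{\gamma \neq 0} r_\S(\gamma) \leq |\S|^\beta$ can be driven below any target $\beta > 0$ at the cost of enlarging $n$.

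\textbf{Main obstacle.} The principal challenge is that $\S$ cannot be freely prescribed: it must arise as the genuine spectrum of a $\{-1,1\}$-valued function, which imposes rigid global constraints (Parseval, integrality of restrictions, etc.). The addressing-on-disjoint-subspaces template is attractive because it yields manifestly boolean functions with transparent Fourier structure, but the analysis must control the degenerate directions $\gamma$ that are concentrated purely on the address coordinates, where all blocks collide along the address axis and $r_\S$ would otherwise spike to $|\S|$. Ruling these directions out requires either a second layer of addressing that breaks the block-symmetry on the address side, or a coupling between the choice of $g_a$ and the address so that no cross-block alignment accumulates a large number of representations of a single $\gamma$; making this accounting work uniformly over all $\gamma$, while preserving the polynomial-size growth of $|\S|$, is the delicate part of the argument.
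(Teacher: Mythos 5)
Your reformulation in terms of the representation function $r_\S(\gamma) = |\S \cap (\S+\gamma)|$ is correct, and it is indeed the natural language for the problem. However, the proposal then diverges from the paper's argument in a way that leaves a genuine gap, and the target you set yourself is in fact too strong.

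\textbf{The uniform target is misguided.} You aim for a uniform bound $\max_{\gamma \ne 0} r_\S(\gamma) \le |\S|^\beta$ with $\beta$ driven down by recursion. But the paper does not (and likely cannot) achieve such a uniform bound. Its construction for this theorem is the boolean function computed by a full balanced binary decision tree of depth $d$ on $n = 2^d-1$ variables, not subspace addressing. There one has $\S = \bigcup_i \{s : s \subseteq P_i,\ \N(i) \in s\}$, where $P_i$ is the root-to-leaf path and $\N(i)$ the final internal node. In this $\S$, the direction $\gamma = \{\text{root}\}$ satisfies $r_\S(\gamma) = |\S|$: every $\beta_1 \in \S$ pairs with $\beta_2 = \beta_1 + \{\text{root}\}$, which is again in $\S$. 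So $\max_\gamma r_\S(\gamma)$ is as large as possible, and yet the conjecture is refuted, because the total mass $\sum_{\gamma\,:\,r_\S(\gamma) \text{ large}} r_\S(\gamma)$ is small. The paper's key lemma is combinatorial: for $\gamma_1, \gamma_2$ coming from distinct leaf blocks, $r_\S(\gamma_1+\gamma_2) \le 2^{\ell+2}$ where $\ell$ is the depth of the lowest common ancestor of the two paths, and a random pair of paths has $\ell \ge k$ with probability at most $2^{-k}$. This is why the threshold can be pushed all the way to a constant $c$ (Theorem~1.6), which a uniform $|\S|^\beta$ bound could never deliver.

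\textbf{The construction you propose is not carried through.} The subspace-addressing template achieves only $\max_{\gamma \ne 0} r_\S(\gamma) \le O(|\S|^{5/6})$ (or $|\S|^{4/5}$ after optimizing parameters), so the entire burden of reaching small $\beta$ rests on the recursion, for which you give no analysis. You yourself flag the "degenerate address directions" as the main obstacle and offer only conditional ideas ("a second layer of addressing", "a coupling") for resolving it. That obstacle is precisely where exponent floors appear: at each recursion level, the blocks collide along the address axis and $r_\S$ can spike to $|\S|$ for a small but nonempty set of directions, exactly as in the decision-tree example above, and nothing in your sketch bounds the measure these directions carry in the probabilistic sum. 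In short, you have the right first step (the $r_\S$ reformulation) but then pivot to a stronger, unnecessary uniform goal using a construction that is left at the level of a plan; the paper keeps the probabilistic reformulation and settles the question with a different, fully analyzed construction.
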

In fact, Mande and Sanyal conjectured that one can take $\beta = \frac{1}{2}-o(1)$. The conjecture might seem plausible given the numerous results on the additive structure of the spectrum of boolean functions. However, we strongly refute it, as well:
\begin{theorem}[Informal version of \Cref{thm:quantifiedMain}]\label{thm:mainthm}
    For infinitely many $n$, there is a boolean function $f:\F_2^n\to\{-1,1\}$ such that for $\S = \supp(\widehat{f})$, it holds
    \[
    \Pr_{\gamma_1,\gamma_2\in \S}\left[|(\S+\gamma_1)\cap(\S+\gamma_2)|> k\right]= O(1/k)\qquad \forall k \ge 1.
    \]
\end{theorem}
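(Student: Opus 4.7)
The plan is to construct, for each constant $c$, an explicit boolean function $f:\F_2^n\to\{-1,1\}$ whose Fourier support $\S$ is so additively unstructured that almost every pair $(\gamma_1,\gamma_2)\in\S^2$ gives $|(\S+\gamma_1)\cap(\S+\gamma_2)|=|\S\cap(\S+\gamma_1+\gamma_2)|\le c$. A useful preliminary observation is what the boolean constraint forces on $\S$: the identity $\widehat{f^2}=\delta_0$ gives, for every $\delta\ne 0$, the cancellation $\sum_{s\in \S\cap(\S+\delta)}\widehat{f}(s)\widehat{f}(s+\delta)=0$. A single unordered pair $\{s,s+\delta\}\subseteq \S$ would yield $2\widehat{f}(s)\widehat{f}(s+\delta)=0$, contradicting $s,s+\delta\in \S$. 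Hence $|\S\cap(\S+\delta)|\in\{0\}\cup\{4,6,8,\ldots\}$ for every nonzero $\delta$, so the sharpest possible refutation has generic intersection size equal to the floor value $4$.

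Following the template used for \Cref{thm:nogreedy}, I would build $f$ as a variant of the addressing function over a partition of $\F_2^n$ into disjoint affine subspaces. Concretely, take $n=k+m$ with $m$ of order $2^k$, and design $f$ on $\F_2^k\times\F_2^m$ so that its Fourier support has the near-product form $\S\subseteq T\times W$, where $T\subseteq\F_2^k$ is Sidon-like (every nonzero $\delta$ admits at most a constant number of representations as a sum of two elements of $T$) and $W\subseteq\F_2^m$ is a small, structured set of characters (for instance, supported on standard basis vectors, as in vanilla addressing). For $\gamma_i=(\alpha_i,w_i)$ drawn independently and uniformly from $\S$, split the analysis by whether $w_1=w_2$: the degenerate case contributes probability $O(1/|W|)=o(1)$, while in the generic case the intersection reduces to a count of Sidon-like coincidences within $T$ and is bounded by a constant depending only on the near-Sidonness of $T$ and the structural overlap of $W$. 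Tuning $k$, $T$, and $W$ then makes this generic bound at most any prescribed $c\ge 4$.

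The main obstacle lies in this construction step. A true Sidon $\S$ is incompatible with being the spectrum of a boolean function by the observation above, and standard candidates such as the vanilla addressing function, the inner-product function, quadratic forms, and affine-subspace indicators all yield spectra with $|\S\cap(\S+\delta)|=\omega(1)$ for typical $\delta$, which would ruin the $o(1)$ tail probability. The challenge is to engineer the disjoint-affine-subspace construction so that all $4$-cancellations forced by $f^2=1$ are confined to a single block of the partition, leaving the cross-block intersection count genuinely controlled by the Sidon structure of $T$. Once this delicate balance is arranged, the tail bound follows from a routine combination of the Fourier analysis of the subspace partition and a counting estimate for the Sidon-type coincidences.
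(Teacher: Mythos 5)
Your preliminary observation is correct and worth noting: the boolean constraint $f^2 \equiv 1$ forces $\sum_{s \in \S \cap (\S+\delta)} \widehat{f}(s)\widehat{f}(s+\delta) = 0$ for every nonzero $\delta$, and since $\S \cap (\S+\delta)$ is closed under $s \mapsto s+\delta$, its size is even and (when nonzero) at least $4$. This correctly identifies $4$ as the best achievable generic intersection size. However, the proposal does not constitute a proof: the construction is never carried out, and you explicitly flag the key difficulty (``The main obstacle lies in this construction step'') without resolving it.

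Beyond incompleteness, the specific direction you sketch is unlikely to work as stated. The subspace-addressing template of \Cref{ex:subspaceAddressing} is engineered only to avoid a single \emph{large} folding direction; for a typical nonzero $\gamma$, each of the roughly $2^{2k}$ cross-pairs $(i,j)$ contributes $|V_i^\perp \cap (V_j^\perp + \gamma)| = 2^{3k}$ to $|\S \cap (\S+\gamma)|$, so typical intersections are polynomial in $|\S|$, not $O(1)$. Making the cross-block contributions vanish for almost all $\gamma$, as your plan requires, is precisely the unfilled gap, and there is no reason to expect a ``Sidon-like $T$'' decoration of that construction to achieve it. The paper sidesteps this entirely with a different construction: take $f$ computed by a full balanced decision tree of depth $d$ querying $n = 2^d - 1$ distinct variables. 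The Fourier support decomposes over root-to-leaf paths, $\S = \bigcup_i \{s : s \subseteq P_i,\ \N(i) \in s\}$, and for $\gamma_1, \gamma_2$ drawn from distinct leaf-blocks the intersection $|(\S+\gamma_1)\cap(\S+\gamma_2)|$ is bounded by $2^{\ell+2}$, where $\ell$ is the depth of the lowest common ancestor of the two paths. Since two uniformly random leaves have LCA depth $\ge k$ with probability at most $2^{-k}$, the tail bound is immediate. This tree-based argument has no Sidon-type component; the $O(1)$ control comes from the combinatorics of shared path prefixes, which is a qualitatively different mechanism from what you propose.
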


\paragraph{Overview.} Some preliminary material is reviewed in \Cref{sec:prelim}. We prove more precise versions of \cref{thm:nogreedy} in \cref{sec:one} and \cref{thm:mainthm} in \cref{sec:many}. \Cref{sec:conc} contains some final thoughts.

\section{Preliminaries}\label{sec:prelim}

\paragraph*{Communication complexity.} 
Let $f : X \times Y \to \{-1,1\}$ be an arbitrary function. Additionally, assume two parties are given an element $x\in X$ and $y\in Y$, respectively, which the other party cannot see. The (deterministic) \emph{communication complexity of $f$}, denoted $\CC(f)$, is the minimum number of bits over all assignments $(x,y)$ needed to be exchanged in order to evaluate $f$, where the parties may decide on a strategy prior to receiving their inputs.

One can view such a function as an $X\times Y$ matrix, where the $(x,y)$ entry takes the value $f(x,y)$. Thus, it is natural to consider the relationship between linear algebraic measures, such as matrix rank, and communication complexity, as in \Cref{conj:LRC}. For a more thorough treatment of communication complexity, see the excellent book \cite{rao2020communication}.

\paragraph*{Decision trees.} 
Decision trees are simple models of computation. The (deterministic) decision tree \textit{depth} of a function $f : \F_2^n \to \{-1,1\}$ is the maximum over all inputs $x\in \F_2^n$ of the fewest number of input bits one must query to correctly evaluate $f(x)$.

Parity decision trees (PDTs) extend the power of ``traditional'' decision trees by allowing queries to return the sum modulo two of an arbitrary subset of the bits. They are particularly relevant in the study of communication complexity, since for functions of the form $f_{\oplus}(x,y) = f(x+y)$ for $f:\F_2^n \to \{-1,1\}$, the parity decision tree depth and communication complexity are equivalent (up to polynomial factors) \cite{hatami2018structure}.

\paragraph*{Boolean analysis.}
Every function $f : \F_2^n \to \mathbb{R}$ has a unique \textit{Fourier expansion}
\[
    f = \sum_{\alpha \in \F_2^n} \widehat{f}(\alpha) \chi_{\alpha},
\]
where
\[
    \chi_{\alpha}(x) = (-1)^{\langle x, \alpha\rangle} \quad\text{and}\quad \widehat{f}(\alpha) = \langle f, \chi_{\alpha} \rangle = \mathbb{E}_{x\in\F_2^n}[f(x)\chi_{\alpha}(x)].
\]
The set $\supp(\widehat{f}) = \{\alpha \in \F_2^n : \widehat{f}(\alpha) \ne 0\}$ is the \textit{Fourier support}, occasionally denoted $\S$. Its size $|\supp(\widehat{f})|$ is the \textit{Fourier sparsity}. In light of \Cref{conj:XORLRC}, we are primarily interested in the relationship between a function's Fourier sparsity and parity decision tree depth.

In general, a vast array of information about a function can be learned from its Fourier expansion, and we direct readers to the standard text \cite{o2014analysis} for additional background. For our purposes, we will only require the following simple fact. Let $V^{\perp} = \{w : \langle w, v \rangle = 0 \text{ for all } v\in V\}$ be the orthogonal complement of a subspace $V$.

\begin{fct}[See e.g., {\cite[Proposition 3.12]{o2014analysis}}]\label{fct:Fourier_subspace}
    If $A = V+v \subseteq \F_2^n$ is an affine subspace of codimension $k$, then
    \[
        \mathbbm{1}_A = \sum_{\alpha \in V^{\perp}} 2^{-k}\chi_{\alpha}(v)\chi_{\alpha}.
    \]
\end{fct}

\section{One excellent folding direction}\label{sec:one}
A large folding direction implies the existence of a parity query whose answer substantially simplifies the resulting restricted function. This suggests the following greedy approach to resolve the XOR log-rank conjecture: if we can always find distinct $\gamma_1,\gamma_2$ such that $|(\S+\gamma_1) \cap (S+\gamma_2)| \ge \Omega(|\S|\:/\polylog(|\S|))$, then querying $\gamma_1+\gamma_2$ and recursing on the appropriate restriction of $f$ will force $f$ to be constant in $\polylog(|\S|)$ rounds.

We refute this strategy by proving a precise version of \cref{thm:nogreedy}.
\begin{theorem}\label{thm:preciseGreedy}
    For $n=2^k+7k$ with $k \in \mathbb{N}^{\ge 3}$, there is a function $f:\F_2^n\to\{-1,1\}$ such that for $\S = \supp(\widehat{f})$, it holds
    $|\S| \ge 2^{6k}$, and yet
    $|(\S + \gamma_1) \cap (\S + \gamma_2)| \le 2^{5k+4}$ for all  distinct $\gamma_1, \gamma_2 \in \F_2^n$.
\end{theorem}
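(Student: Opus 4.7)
The plan is to construct $f$ as a ``generalized addressing'' function built from disjoint affine subspaces. Split $\F_2^n = \F_2^{2^k} \oplus \F_2^{7k}$ with coordinates $(x,y)$, fix pairwise disjoint affine subspaces $V_1,\dots,V_N \subseteq \F_2^{7k}$ with $V_i = v_i + W_i$, and fix distinct nonzero characters $\alpha_1,\dots,\alpha_N \in \F_2^{2^k}$. Define
\[
f(x,y) \;=\; \begin{cases} \chi_{\alpha_i}(x), & y \in V_i,\\ 1, & y \notin \bigcup_i V_i. \end{cases}
\]
This is a well-defined boolean function by disjointness. Writing $f - 1 = \sum_i \mathbf{1}_{V_i}(y)(\chi_{\alpha_i}(x)-1)$ and using that $\widehat{\mathbf{1}_{V_i}}$ is supported on $W_i^\perp$, one reads off that $\S = \supp(\widehat f)$ contains the disjoint union $\bigsqcup_{i=1}^N \{\alpha_i\} \times W_i^\perp$ plus a small correction on $\{0\} \times \bigcup_i W_i^\perp$. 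A natural target is to take $N = 2^{3k}$ subspaces of codimension $3k$ in $\F_2^{7k}$ (partitioning $\F_2^{7k}$), which yields $|\S| \ge N \cdot 2^{3k} = 2^{6k}$ as required.

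For the collision bound, observe that $|(\S + \gamma_1) \cap (\S + \gamma_2)| = r_\S(\gamma)$ where $\gamma = \gamma_1 + \gamma_2 \ne 0$ and $r_\S(\gamma) := |\{(s_1,s_2) \in \S^2 : s_1+s_2 = \gamma\}|$. Writing $\gamma = (\gamma^x,\gamma^y)$, collisions among the main blocks $\{\alpha_i\} \times W_i^\perp$ split according to whether $\gamma^x = 0$: if $\gamma^x = 0$, only diagonal pairs with $i=j$ contribute, giving
\[
r_\S(\gamma) \;\le\; \sum_{i\,:\,\gamma^y \in W_i^\perp} |W_i^\perp|;
\]
if $\gamma^x \ne 0$, each pair $(i,j)$ with $\alpha_i + \alpha_j = \gamma^x$ contributes at most $|W_i^\perp \cap W_j^\perp|$. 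Cross-contributions involving the correction mass on $\{0\} \times \bigcup_i W_i^\perp$ are controlled by the same intersection quantities and should turn out to be of strictly lower order.

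The heart of the argument is to choose $\{V_i\}$ and $\{\alpha_i\}$ so that both sums are at most $2^{5k+4}$ for every $\gamma \ne 0$. This reduces to three ``spreading'' conditions on the construction: (a) every nonzero $\gamma^y \in \F_2^{7k}$ lies in at most $\widetilde O(2^{2k})$ of the dual subspaces $W_i^\perp$; (b) each pairwise intersection $W_i^\perp \cap W_j^\perp$ is small (ideally trivial for most pairs); and (c) the character set $\{\alpha_i\} \subseteq \F_2^{2^k}$ has bounded additive representation, so that each $\gamma^x \ne 0$ equals $\alpha_i + \alpha_j$ for only a small number of pairs $(i,j)$. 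I would implement these with an explicit algebraic construction: take the $W_i$ as graphs of matrices from a maximum-rank-distance code, or a comparable spread-like family in $\F_2^{7k}$, to secure (a) and (b); take the $\alpha_i$ from a Sidon-type set in $\F_2^{2^k}$ to secure (c). The main obstacle will be to realize all three conditions simultaneously within a single construction and, through a careful case analysis on the location of $\gamma = (\gamma^x,\gamma^y)$, to extract the uniform bound $2^{5k+4}$ — this is exactly where the $7k$ dimension budget on the $y$-side is calibrated.
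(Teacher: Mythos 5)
Your high-level construction is the same as the paper's: a ``subspace addressing'' function that routes through disjoint affine subspaces and attaches distinct addresses (the paper's Example 2.2). However, your parameter choice and your treatment of the ``correction'' term diverge from the paper in ways that create genuine gaps.

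First, the parameters. You propose $N=2^{3k}$ affine subspaces of codimension $3k$ (linear parts $W_i$ of dimension $4k$) partitioning $\F_2^{7k}$, with addresses $\alpha_i$ from a Sidon set in $\F_2^{2^k}$. The paper instead takes $N=2^k$ disjoint affine subspaces of dimension $2k$ (so codimension $5k$), covering only a $2^{-4k}$ fraction of $\F_2^{7k}$, with addresses being the standard basis vectors $e_i$. This is not a cosmetic difference: with your parameters, $\dim W_i+\dim W_j=8k>7k$, so $W_i\cap W_j$ can never be trivial, and the clean intersection bound $|W_i^\perp\cap(W_j^\perp+v)|=2^{7k-\dim W_i-\dim W_j}$ (the paper's Claim 2.5, the main tool for controlling cross-terms) is unavailable. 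Also, a genuine partition of $\F_2^{7k}$ into affine $4k$-planes with varying linear parts is a strong structural requirement you never establish; the paper sidesteps it entirely by leaving most of the domain unaddressed.

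Second, your claim that contributions from the ``correction on $\{0\}\times\bigcup_i W_i^\perp$'' are ``of strictly lower order'' is wrong under the paper's (and, without a partition, your) setup, and it is exactly the step the paper must work to control. The correction set $\bigcup_i W_i^\perp$ has size on the order of $|\S|$, and its \emph{cross-terms} $\sum_{i\ne j}|W_i^\perp\cap(W_j^\perp+\gamma^y)|$ are a main contribution in the $\gamma^x=0$ case. In the paper's analysis (Case 1 of the final claim), this term evaluates to $2^{2k}\cdot 2^{3k}=2^{5k}$, the same order as the diagonal main-block term; it is controlled only because $V_i\cap V_j=\{0\}$, which in turn holds only because the linear parts have dimension $2k\le 7k/2$. (If you instead truly partition the space, the correction cancels identically, but then your formula $f-1=\sum_i\mathbf 1_{V_i}(\chi_{\alpha_i}-1)$ would simplify to $f=\sum_i\mathbf 1_{V_i}\chi_{\alpha_i}$ with no correction at all --- your writeup is internally inconsistent on this point.)

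Third, you leave the construction's combinatorial properties as a wish list (MRD codes, spreads, Sidon sets) rather than verifying them. The paper's actual work is Lemma 2.3: a random choice of the subspaces is shown, with probability $1-2^{-k+2}$, to simultaneously satisfy full rank, pairwise disjointness of the affine cosets, triviality of $V_i\cap V_j$, and the key ``spread'' property that every nonzero $v$ lies in at most $7$ of the $V_i^\perp$. That probabilistic lemma, together with Claim 2.5, is what makes the three-way case analysis on $|\gamma_y|\in\{0,1,2\}$ close. Your plan recognizes that analogous conditions are needed but does not produce a family satisfying them, and with your parameters at least one of them (trivial $W_i\cap W_j$) is dimensionally impossible.
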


To build intuition for our construction, we first consider the standard addressing function.
\begin{example}[Addressing]
    Define $f : \F_2^{k+2^k} \to \{-1,1\}$ by 
    \[
    f(x,y) = (-1)^{y_x} = \sum_{z\in \F_2^k} \mathbbm{1}_z(x) \cdot (-1)^{y_z},
    \]
    where $x \in \F_2^{k}$ and $y \in \F_2^{2^k}$ (and slightly abusing notation by indexing $y$ with vectors).
\end{example}

A greedy approach is sufficient for a PDT to evaluate this function. Simply query each address bit, then the corresponding addressed bit. Each query eliminates half of the remaining possible address values, so the PDT has depth $k+1$, while the function's sparsity is exponential in $k$. To modify the function to prevent this approach, we encode the address using subspaces to obfuscate it while maintaining Fourier sparsity.

\begin{example}[Subspace addressing]\label{ex:subspaceAddressing}
    Let $A_1, \ldots, A_{2^k} \subset \F_2^{7k}$ be disjoint affine subspaces of dimension $2k$. Define $f : \F_2^{7k+2^k} \to \{-1,1\}$ by 
    \[
    f(x,y) = \begin{cases}
        (-1)^{y_i} & x \in A_i \\
        1 & x \not\in A_1 \cup \cdots \cup A_{2^k}
    \end{cases},
    \]
    where $x \in \F_2^{7k}$ and $y \in \F_2^{2^k}$.
\end{example}

We choose $A_i$'s randomly and show that the resulting function $f$ has the suitable properties we need with high probability. 

\begin{lemma}
\label{lemma:random-f}
    Suppose the random function $f$ is constructed by picking random affine subspaces $A_1,\cdots,A_{2^k}\subset \F_2^{7k}$ as follows: for each $i\in [2^k]$, choose vectors $a_i,v_i^1,\cdots,v_i^{2k}\in \F_2^{7k}$ uniformly and independently, and let $V_i = \langle v_i^1,\cdots,v_i^{2k}\rangle$ and $A_i =  V_i+a_i$. Then with probability $1-2^{-k+2}$, all of the following hold:
    \begin{enumerate}
        \item[(a)]  $\forall i$, $\dim(V_i) = 2k$.
        \item[(b)]  $\forall i\neq j$, $A_i\cap A_j = \emptyset$.
        \item[(c)]  $ \forall i\neq j$, $V_i\cap V_j = \{0\}$.
        \item[(d)] For all nonzero  
        $v\in \F_2^{7k}$, $|\{i : v \in V_i^{\perp}\}|\leq 7$.
    \end{enumerate}
\end{lemma}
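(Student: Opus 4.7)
The plan is to show each of the four conditions fails with probability $O(2^{-k})$ and then conclude by a union bound, matching the claimed $2^{-k+2}$. Throughout, the key structural feature to exploit is the mutual independence of all $(2k+1)\cdot 2^k$ random vectors $a_i, v_i^1,\ldots,v_i^{2k}$.

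Conditions (a)--(c) are standard counting arguments. For (a), reveal $v_i^1,\ldots,v_i^{2k}$ one at a time; the probability of linear dependence is $\le \sum_{t=0}^{2k-1} 2^{t-7k} \le 2^{-5k}$, and a union bound over $i\in[2^k]$ contributes $2^{-4k}$. For (b), condition on $V_i, V_j$: the event $A_i\cap A_j\neq\emptyset$ is equivalent to $a_i+a_j\in V_i+V_j$, a subspace of dimension $\le 4k$, so it occurs with probability $\le 2^{-3k}$ over the uniform independent offsets, and $\binom{2^k}{2}$ pairs give $O(2^{-k})$. For (c), I would condition on $V_j$ and reveal the generators of $V_i$ sequentially: with $W_t := \langle v_i^1,\ldots,v_i^t\rangle$, the invariant ``$\dim W_t = t$ and $W_t\cap V_j=\{0\}$'' survives step $t+1$ unless $v_i^{t+1}\in W_t+V_j$, a set of size $\le 2^{t+2k}$; summing and union-bounding over pairs again contributes $O(2^{-k})$.

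The main difficulty lies in (d). Fix any nonzero $v\in\F_2^{7k}$ and define $Z_i := \mathbf{1}[v\in V_i^{\perp}] = \mathbf{1}[\langle v, v_i^t\rangle = 0 \text{ for all } t\in[2k]]$. Because the $v_i^t$ are independent uniform, each $Z_i$ is Bernoulli with $\Pr[Z_i=1]=2^{-2k}$, and the indicators are independent across $i\in[2^k]$. The task is to bound $\Pr[\sum_i Z_i \ge 8]$. The expected count $2^k\cdot 2^{-2k}=2^{-k}$ is already tiny, so Markov's inequality only yields $\Pr[\sum_i Z_i\ge 8]\le 2^{-k}/8$, which cannot overcome the union bound over $\sim 2^{7k}$ vectors $v$. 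Instead, I would use the direct combinatorial tail estimate
\[
\Pr\!\left[\textstyle\sum_{i} Z_i \ge 8\right] \;\le\; \binom{2^k}{8}\,(2^{-2k})^8 \;\le\; 2^{8k-16k} \;=\; 2^{-8k},
\]
obtained by summing the probability that $Z_{i_1}=\cdots=Z_{i_8}=1$ over all size-$8$ subsets. This bound beats $2^{7k}$ with room to spare, so union-bounding over nonzero $v\in\F_2^{7k}$ contributes at most $2^{-k}$ to the overall failure probability.

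The expected main obstacle is exactly step (d): since the first moment is already $2^{-k}$, any tail bound that uses only expectation is hopeless, and one must exploit the full independence of the $V_i$'s to get decay fast enough (namely $2^{-8k}$) to survive the union bound over the large alphabet $\F_2^{7k}\setminus\{0\}$. Summing the four contributions $2^{-4k}$, $O(2^{-k})$, $O(2^{-k})$, $2^{-k}$ yields a total failure probability bounded by $2^{-k+2}$ for all $k\ge 3$, completing the argument.
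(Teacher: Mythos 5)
Your proof is correct and follows essentially the same approach as the paper: a union bound over the four conditions, with the crucial step (d) handled via exactly the same tail estimate $\binom{2^k}{8}\,2^{-16k}\le 2^{-8k}$ that beats the union bound over the $2^{7k}-1$ nonzero vectors. The bookkeeping for (a)--(c) is presented slightly differently (sequential revealing, conditioning on $V_i,V_j$), but the underlying counting and the resulting bounds match the paper's.
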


\begin{proof} For brevity, let $m = 7k$.
\begin{enumerate}
    \item[(a)]  Fix $i \in [2^k]$. The probability that vectors $v_i^1,\cdots,v_i^{2k}$ are linearly independent is at least 
    $$\frac{2^m-1}{2^m}\cdot \frac{2^m-2}{2^m}\cdot \frac{2^m-2^2}{2^m}\cdot\cdots \cdot \frac{2^m-2^{2k-1}}{2^m}\geq (1-2^{2k-m})^m\geq 1- m2^{2k-m}.$$
    Hence the probability that there is $i\in [2^k]$ for which $v_i^1,\cdots,v_i^{2k}$  are not linearly independent is at most $m2^{3k-m} = 7k 2^{-4k} \le 2^{-k}$.

    \item [(b)] Fix $i\neq j$. The probability that $A_i\cap A_j \neq \emptyset$ is at most $2^{2k}2^{2k-m} = 2^{4k-m}$. 
    Hence, the probability that there are $i\neq j$ with $A_i\cap A_j\neq \emptyset$ is at most $2^{2k}2^{4k-m} = 2^{-k}$.
    
    \item [(c)] Fix $i\neq j$. The probability that $V_i\cap V_j \neq \{0\}$ is at most $(2^{2k}-1)2^{2k-m} \leq 2^{4k-m}$. 
    Hence, the probability that there are $i\neq j$ with $V_i\cap V_j\neq \emptyset$ is at most $2^{2k}2^{4k-m} = 2^{-k}$.
    
    \item [(d)] The probability  that a fixed nonzero vector $v \in \F_2^{7k}$ is orthogonal to at least $t$ subspaces among $V_1,\cdots,V_{2^k}$ is at most $\binom{2^k}{t}2^{-2tk} \le 2^{-tk}$. Taking $t=8$ and union bounding over all $2^{7k}-1$ options for $v$ shows that the probability that there is $v$ for which $|\{i : v \in V_i^{\perp}\}|\geq 8$ is at most 
    $2^{-k}$. 
    \end{enumerate}
    
    By the union bound, the probability that any of items (a) to (d) are not satisfied is at most $4\cdot 2^{-k} = 2^{-k+2}$. 
\end{proof}

We will assume from now on that $f$ is chosen randomly so that \Cref{lemma:random-f} holds, and set $\S=\supp(\widehat{f})$.
It remains to prove there is no large folding direction. First, we give a lower bound on the size of Fourier support of $f$.
\begin{claim}\label{claim:sizeOfS}
$|\mathcal{S}|\geq 2^{6k}$.
\end{claim}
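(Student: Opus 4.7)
The plan is to compute the Fourier expansion of $f$ explicitly and count the nonzero coefficients indexed by $\beta = e_i$.

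First I would rewrite $f$ in a form convenient for Fourier analysis. Setting $B = A_1 \cup \cdots \cup A_{2^k}$ (a disjoint union by property (b) of \Cref{lemma:random-f}), we have
\[
f(x,y) = \mathbf{1}_{B^c}(x) \cdot 1 + \sum_{i=1}^{2^k} \mathbf{1}_{A_i}(x) \cdot (-1)^{y_i}.
\]
For a character $\chi_{\alpha,\beta}(x,y) = (-1)^{\langle \alpha,x\rangle + \langle \beta, y\rangle}$ on $\F_2^{7k}\times \F_2^{2^k}$, taking $\mathbb{E}_y$ using orthogonality of the characters on $\F_2^{2^k}$ kills every term except when $\beta = 0$ (from the $\mathbf{1}_{B^c}$ term) or $\beta = e_i$ for some $i \in [2^k]$ (from the $i$-th summand). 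Thus nonzero Fourier coefficients can only occur at these $\beta$'s.

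Next I would focus on $\beta = e_i$. Because $A_i = V_i + a_i$ is an affine subspace of dimension $2k$ (by property (a)),
\[
\widehat{f}(\alpha, e_i) = \mathbb{E}_x[\mathbf{1}_{A_i}(x)\chi_\alpha(x)] = \frac{2^{2k}}{2^{7k}}\cdot (-1)^{\langle \alpha, a_i\rangle} \cdot \mathbb{E}_{v\in V_i}[\chi_\alpha(v)],
\]
and the inner average equals $1$ if $\alpha \in V_i^\perp$ and $0$ otherwise. Hence for each $i$ there are exactly $|V_i^\perp| = 2^{5k}$ directions $\alpha$ giving a nonzero coefficient $(\alpha, e_i) \in \mathcal{S}$.

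Finally, since the vectors $e_1, \ldots, e_{2^k}$ are distinct, the collections $\{(\alpha, e_i) : \alpha \in V_i^\perp\}$ are pairwise disjoint subsets of $\mathcal{S}$, yielding
\[
|\mathcal{S}| \;\geq\; \sum_{i=1}^{2^k} |V_i^\perp| \;=\; 2^k \cdot 2^{5k} \;=\; 2^{6k}.
\]
There is no real obstacle here; the only subtlety is invoking property (a) to guarantee $\dim V_i = 2k$ so that $|V_i^\perp| = 2^{5k}$ exactly, and noting that disjointness of the $e_i$ coordinates makes the count additive without any overlap to worry about.
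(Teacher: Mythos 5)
Your proof is correct and follows essentially the same route as the paper: both decompose $f$ into the $\mathbf{1}_{B^c}$ term plus the $\mathbf{1}_{A_i}(x)(-1)^{y_i}$ terms, identify that the Fourier support of each latter term lies in $V_i^\perp + e_i$, and conclude by noting these sets are pairwise disjoint (because of the distinct $e_i$'s), giving $2^k \cdot 2^{5k} = 2^{6k}$. Your version is slightly more explicit in that you compute $\widehat{f}(\alpha,e_i)$ directly via orthogonality of the $y$-characters rather than arguing informally that "coefficients will not be canceled," but this is a presentational difference, not a different argument.
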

\begin{proof}
We can express $f$ as
\begin{align*}
f(x,y) &= \mathbbm{1}_{(A_1 \cup \cdots \cup A_{2^k})^c}(x) + \sum_{i=1}^{2^k} \mathbbm{1}_{ A_i}(x) \cdot (-1)^{y_i} 
\\
 &= 1 - \sum_{i=1}^{2^k} \mathbbm{1}_{A_i}(x) + \sum_{i=1}^{2^k} \mathbbm{1}_{A_i}(x) \cdot (-1)^{y_i}\\
 &= 1 + \sum_{i=1}^{2^k} \mathbbm{1}_{ A_i}(x) \cdot ((-1)^{y_i}-1).
\end{align*}

By \Cref{fct:Fourier_subspace}, the Fourier support of the function $\mathbbm{1}_{ A_i}(x)$  is  $ V_i^\perp\subset \F_2^{7k}$, and of $\mathbbm{1}_{ A_i}(x)\cdot (-1)^{y_i}$ is $ V_i^\perp+e_i$,
where $e_i$ is the $i$-th basis vector in the standard basis for $\F_2^{2^k}$ embedded in the space $\F_2^{7k}\times \F_2^{2^k}$.
Since the affine subspaces $ V_i^\perp+e_i$ are disjoint and also $\left( V_i^\perp+e_i\right) \cap \left( V_i^\perp\right) = \emptyset $  the coefficients of characters in $V_i^\perp+e_i$ will not be canceled. Hence, we get that $$\bigcup_{i=1}^{2^k} \left(V_i^\perp + e_i\right)\subset \mathcal{S} $$ and so $|\mathcal{S}|\geq 2^k \cdot 2^{7k - 2k} = 2^{6k}$.
\end{proof}

We also need the following claim.
\begin{claim}\label{claim:intersectionsize}
    Suppose $W_1,W_2\subset \F_2^{n}$ are two linear subspaces such that $W_1\cap W_2 = \{0\}$. Then for all $x\in \F_2^n$, $$|W_1^\perp\cap (W_2^\perp +x)|= 2^{n-\dim{W_1}-\dim{W_2}}.$$
\end{claim}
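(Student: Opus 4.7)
The plan is to interpret the intersection as the solution set of an inhomogeneous system of linear equations and to count solutions via rank. Membership in $W_1^\perp \cap (W_2^\perp + x)$ is equivalent to $y \in \F_2^n$ simultaneously satisfying $\langle y, v \rangle = 0$ for every $v \in W_1$ and $\langle y, w \rangle = \langle x, w \rangle$ for every $w \in W_2$. Choosing bases $v_1,\ldots,v_{d_1}$ of $W_1$ and $w_1,\ldots,w_{d_2}$ of $W_2$ (where $d_i = \dim W_i$), the intersection is exactly the set of $y$ satisfying $d_1 + d_2$ linear constraints.

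Next, I would use the hypothesis $W_1 \cap W_2 = \{0\}$ to show that the combined list $v_1, \ldots, v_{d_1}, w_1, \ldots, w_{d_2}$ is linearly independent in $\F_2^n$; indeed any linear dependence would yield a nonzero element of $W_1 \cap W_2$. Consequently, the linear map $T : \F_2^n \to \F_2^{d_1 + d_2}$ sending $y$ to the vector of inner products with the $v_i$ and $w_j$ is surjective. By the rank–nullity theorem, every preimage under $T$ has cardinality $2^{n - d_1 - d_2}$, and since the prescribed right-hand side $(0,\ldots,0,\langle x, w_1\rangle, \ldots, \langle x, w_{d_2}\rangle)$ lies in the image of $T$, the solution set is nonempty and has size exactly $2^{n - \dim W_1 - \dim W_2}$, as desired.

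The only potentially tricky step is establishing non-emptiness for arbitrary $x$; if I wanted to sidestep the rank argument, I would instead note that $W_1^\perp + W_2^\perp = (W_1 \cap W_2)^\perp = \F_2^n$, so any $x$ decomposes as $y_1 + y_2$ with $y_i \in W_i^\perp$, and then $y_1 = x - y_2 \in W_1^\perp \cap (x + W_2^\perp)$ witnesses the intersection being nonempty. Once non-emptiness is in hand, the intersection is an affine translate of $W_1^\perp \cap W_2^\perp = (W_1 + W_2)^\perp$, which has dimension $n - (\dim W_1 + \dim W_2)$ using the direct-sum hypothesis, giving the same count.
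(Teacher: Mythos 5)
Your proof is correct, and it takes a different route from the paper's. The paper changes coordinates so that $W_1 = \F_2^{d_1}\times 0 \times 0$ and $W_2 = 0 \times \F_2^{d_2}\times 0$, then reads off the intersection directly from the resulting block structure; this is a quick, concrete computation, though the ``without loss of generality'' step implicitly uses that a linear change of basis $A$ transforms orthogonal complements by $(A^{-1})^T$ and hence preserves the cardinality of the intersection as $x$ ranges over $\F_2^n$. Your first argument instead encodes membership in $W_1^\perp\cap(W_2^\perp+x)$ as an inhomogeneous linear system, uses $W_1\cap W_2 = \{0\}$ to get linear independence of the combined rows and hence surjectivity of the associated map $T:\F_2^n\to\F_2^{d_1+d_2}$, and then counts fibers via rank--nullity; your alternative via $W_1^\perp + W_2^\perp = (W_1\cap W_2)^\perp = \F_2^n$ and $W_1^\perp\cap W_2^\perp = (W_1+W_2)^\perp$ is equally clean. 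Both of your versions are coordinate-free and make the role of the hypothesis $W_1\cap W_2=\{0\}$ more visible (it is exactly what gives surjectivity / full complementation), whereas the paper's approach is shorter but hides that logic inside the choice of basis.
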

\begin{proof}
Suppose $\dim(W_1) = d_1$ and $\dim(W_2) = d_2$. 
Without loss of generality, assume that $W_1 = \F_2^{d_1}\times 0^{d_2}\times 0^{n-d_1-d_2}$ and $W_2  = 0^{d_1}\times \F_2^{d_2}\times 0^{n-d_1-d_2}$. 
Note that 
$W_1^\perp = 0^{d_1}\times \F_2^{d_2}\times \F_2^{n-d_1-d_2}$ 
and 
$W_2^\perp = \F_2^{d_1}\times 0^{d_2}\times \F_2^{n-d_1-d_2}$. Pick an arbitrary $x=(x_1,x_2,x_3)\in \F_2^{d_1}\times \F_2^{d_2}\times \F_2^{n-d_1-d_2}$. Then $W_2^\perp + (x_1,x_2,x_3) = \F_2^{d_1}\times \{x_2\}\times \F_2^{n-d_1-d_2}$ and 
$W_1^\perp\cap (W_2^\perp +x) = 0^{d_1}\times \{x_2\}\times \F_2^{n-d_1-d_2} $ has the claimed size.
\end{proof}
Finally, \cref{thm:preciseGreedy} follows from  claim below.
\begin{claim}
    For all distinct $\gamma_1, \gamma_2 \in \F_2^{7k+2^k}$, we have
    $$|(\S+\gamma_1) \cap (\S + \gamma_2)| \le 2^{5k+4}.$$ 
\end{claim}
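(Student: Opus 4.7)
The plan is to reduce to bounding $|\S \cap (\S+\gamma)|$ with $\gamma := \gamma_1+\gamma_2 \neq 0$, since translating by $\gamma_1$ preserves the intersection size. The first step is to make explicit the structure of $\S$ that is implicit in the proof of \Cref{claim:sizeOfS}: the Fourier computation there shows that every character in $\S$ either has $y$-part $0$ with $x$-part lying in $\bigcup_j V_j^{\perp}$, or has $y$-part equal to some $e_i$ with $x$-part in $V_i^{\perp}$. Thus
\[
\S \;\subseteq\; \Bigl(\textstyle\bigcup_j V_j^{\perp}\Bigr)\times \{0\} \;\cup\; \bigcup_i\bigl(V_i^{\perp}\times\{e_i\}\bigr).
\]

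Write $\gamma = (u_0,w_0)$ with $u_0\in\F_2^{7k}$ and $w_0\in\F_2^{2^k}$. For $(u,w) \in \S$ with $(u+u_0,w+w_0) \in \S$, both $w$ and $w+w_0$ must lie in $\{0,e_1,\dots,e_{2^k}\}$, which forces $w_0$ into one of four types: $w_0=0$, $w_0=e_i$, $w_0=e_i+e_j$ with $i\neq j$, or something else (in which case the intersection is empty). In each of the three nontrivial cases, bounding the count of admissible $u$ reduces to summing intersection sizes of the form $|V_j^{\perp}\cap(V_{j'}^{\perp}+u_0)|$. By property (c) of \Cref{lemma:random-f} and \Cref{claim:intersectionsize}, such an intersection has size exactly $2^{3k}$ when $j\neq j'$, while for $j=j'$ it equals $2^{5k}$ if $u_0\in V_j^{\perp}$ and is empty otherwise.

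The dominant case is $w_0=0$, where I would upper bound
\[
|\S\cap(\S+\gamma)| \;\le\; \sum_{j,j'}\bigl|V_j^{\perp}\cap(V_{j'}^{\perp}+u_0)\bigr| \;+\; \sum_{i}\bigl|V_i^{\perp}\cap(V_i^{\perp}+u_0)\bigr|,
\]
where the first sum handles $(u,w)\in\S$ with $w=0$ and the second the contribution with $w=e_i$. The off-diagonal terms $j\neq j'$ contribute at most $2^{2k}\cdot 2^{3k}=2^{5k}$ in total. The diagonal terms $j=j'$ each equal $2^{5k}$ but vanish unless $u_0\in V_j^{\perp}$; by property (d), this happens for at most $7$ values of $j$, so both diagonal sums are $O(2^{5k})$. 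The remaining cases $w_0=e_i$ and $w_0=e_i+e_j$ are smaller and handled by the same recipe: each reduces, after one more application of \Cref{claim:intersectionsize}, to $O(2^{5k})$. Collecting all contributions yields $|\S\cap(\S+\gamma)| \le 2^{5k+4}$, as desired.

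The only real obstacle is the $w_0=0$ case: the diagonal terms $|V_j^{\perp}\cap(V_j^{\perp}+u_0)|$ can each individually be $2^{5k}$, and without property (d) they would sum to $2^k\cdot 2^{5k}=2^{6k}=|\S|$, giving no nontrivial bound whatsoever. Property (d) is precisely what tames this diagonal, which is why it was engineered into \Cref{lemma:random-f}.
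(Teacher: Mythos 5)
Your proposal is correct and tracks the paper's own proof closely: both reduce to bounding $|\S \cap (\S+\gamma)|$, both decompose $\S$ into the slices $V_j^\perp\times\{0\}$ and $V_i^\perp\times\{e_i\}$, both case-split on the $y$-part of $\gamma$, and both use \Cref{claim:intersectionsize} with property (c) for the off-diagonal terms and property (d) to tame the diagonal contribution of $7\cdot 2^{5k}$. The one small refinement is that you deduce $w_0\in\{0,e_i,e_i+e_j\}$ directly from non-emptiness of $\S\cap(\S+\gamma)$ rather than first reducing to $\gamma_1,\gamma_2\in\S$ and then invoking $\gamma\in\S+\S$, but this is cosmetic.
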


\begin{proof}
First, note that it suffices to prove the claim for all distinct $\gamma_1,\gamma_2 \in \S$, since if $s_1 + \gamma_1 = s_2 + \gamma_2$ for $s_1, s_2 \in \S$, it must be that $\gamma_1+\gamma_2 = s_1+s_2 \in \S+\S$. Pick an arbitrary non-zero $\gamma = \gamma_1 + \gamma_2$ for $\gamma_1, \gamma_2\in \mathcal{S}$.  
Remember that 
\[
|(\S+\gamma_1) \cap (\S + \gamma_2)| = |\S \cap (\S + \gamma)| \quad \text{and} \quad \mathcal{S} \subseteq \left(\bigcup_{i=1}^{2^k}V_i^\perp\right)\cup\left( \bigcup_{i=1}^{2^k}(V_i^\perp+e_i)\right).
\]
Hence $\S \cap (\S + \gamma)\subseteq A\cup B\cup C$, where 
\begin{align*}
    A &= \bigcup_{i,j} \left(V_i^\perp \cap (V_j^\perp+\gamma)\right) \\
    B &= \bigcup_{i,j} \left(V_i^\perp \cap( V_j^\perp+e_j+\gamma)\right) \\
    C &= \bigcup_{i,j} \left((V_i^\perp+e_i) \cap (V_j^\perp+e_j+\gamma)\right).
\end{align*}

Let $|\cdot|$ denote the Hamming weight of a vector.
Decompose $\gamma = (\gamma_x,\gamma_y)$  where $\gamma_x\in\F_2^{7k}$ and $\gamma_y\in\F_2^{2^k}$.
    Observe that $|\gamma_y|\leq 2$, since (as noted above) we may assume $\gamma \in \S + \S$ without loss of generality.
    \begin{itemize}[leftmargin=1.6cm]
    
    \item[\textbf{Case 1:}]
     $|\gamma_y| = 0$. 

Note that in this case $B = \emptyset$ and $C = \bigcup_{i} \left((V_i^\perp+e_i) \cap (V_i^\perp+e_i+\gamma_x)\right)$. Overall, we get 
   \begin{align*}
   |\S \cap (\S + \gamma_x)|&\leq  
\left|\bigcup_{i,j} \left(V_i^\perp \cap (V_j^\perp+\gamma_x)\right) \right|
+
\left|\bigcup_{i} \left((V_i^\perp+e_i) \cap (V_i^\perp+e_i+\gamma_x)\right) \right|\\
&=  
\left|\bigcup_{i,j} \left(V_i^\perp \cap (V_j^\perp+\gamma_x)\right) \right|
+
\left|\bigcup_{i} \left(V_i^\perp \cap (V_i^\perp+\gamma_x)\right) \right|\\
     &\leq  \sum_{i\neq j} |V_i^\perp \cap (V_j^\perp+\gamma_x)| + 2\sum_{i}|V_i^\perp \cap (V_i^\perp+\gamma_x)|\\
     &\leq \sum_{i\neq j} |V_i^\perp \cap (V_j^\perp+\gamma_x)|+2\cdot 2^{5k} \cdot |\{i : \gamma_x \in V_i^{\perp}\}|.
\end{align*}
To bound the first term, note that $V_i\cap V_j = \{0\}$ for all $i\neq j$ (by item (c) of \cref{lemma:random-f}). Using \cref{claim:intersectionsize} we have that $|V_i^\perp \cap (V_j^\perp+\gamma_x)| = 2^{7k-\dim(V_i)-\dim(V_j)}= 2^{7k-2k-2k}= 2^{3k}$.

    To bound the second term, by item (d) of \cref{lemma:random-f}, we have that  $|\{i : \gamma_x \in V_i^{\perp}\}|\leq 7$.
    Overall, we get that 
    $$|\S \cap (\S + \gamma)| \leq 2^{2k}\cdot2^{3k}+7\cdot 2^{5k+1} \le 2^{5k+4}.$$
    
    \item[\textbf{Case 2:}]
     $|\gamma_y| = 1$. Suppose  $\gamma_y = e_i$ for some $i$. 
    
    In this case, $A = C = \emptyset$ and $B = V_i^\perp\cap(V_i^\perp+e_i+\gamma_y)$. Hence,
    \[
    |\S \cap (\S + \gamma)| \le |V_i^\perp\cap(V_i^\perp+e_i+\gamma_y)| \leq |V_i^\perp|= 2^{5k},
    \]

    \item[\textbf{Case 3:}]
    $|\gamma_y| = 2$. This is similar to Case 2. \qedhere

    \end{itemize}
\end{proof}
\begin{remark}\label{rmk:improvedBounds}
    We have chosen parameters for simplicity of exhibition; however, by choosing the original disjoint affine subspaces from $\F_2^{(6+\varepsilon)k}$ rather than $\F_2^{7k}$, a similar analysis rules out any bounds stronger than $\PDT(f) = \widetilde{O}(|\S|^{1/5})$ resulting from this greedy method. 
\end{remark}

\section{Many good folding directions}\label{sec:many}
Rather than hoping for one large folding direction, \cite{mande2020parity} sought many nontrivial ones.
In this section, we refute their conjecture (\cref{conj:manydir}) with the following quantified version of \cref{thm:mainthm}.
\begin{theorem}\label{thm:quantifiedMain}
    For $n=2^d-1$ with $d \in \mathbb{N}$, there is a function $f:\F_2^n\to\{-1,1\}$ such that for $\S = \supp(\widehat{f})$, it holds
    \[
    \Pr_{\gamma_1, \gamma_2 \in \S} \left[\left|(\S + \gamma_1) \cap (\S + \gamma_2)\right| \ge 2^{k+2} \right] \le 2^{-k} + 2^{1-d} \qquad \forall k \ge 1.
    \]
\end{theorem}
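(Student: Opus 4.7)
My plan is to adapt the randomized subspace-addressing construction of \cref{ex:subspaceAddressing}, tuning the parameters so that the Fourier support $\S$ is approximately Sidon: most pairs $(\gamma_1, \gamma_2) \in \S \times \S$ yield a sum with only $O(1)$ representations from $\S + \S$, while a rare ``bad'' event of probability $\approx 2^{1-d}$ captures all configurations producing large $|\S \cap (\S + \gamma_1 + \gamma_2)|$. Concretely, for $n = 2^d - 1$ I would take $f$ defined as in \cref{ex:subspaceAddressing} on $\F_2^{m} \times \F_2^{2^{d-1}}$ with $m = 2^{d-1} - 1$, where $A_1, \ldots, A_{2^{d-1}}$ are random disjoint affine subspaces of $\F_2^m$ of dimension $\lfloor m/2 \rfloor$. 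A variant of \cref{lemma:random-f} should guarantee that the associated $V_i$'s are in generic position with positive probability.

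With this choice, the Fourier support contains $\bigsqcup_{i=1}^{2^{d-1}} (V_i^\perp + e_i)$ as in \cref{claim:sizeOfS}, and \cref{claim:intersectionsize} gives $|V_i^\perp \cap (V_j^\perp + \gamma_x)| = 2^{m - 2\lfloor m/2 \rfloor} = O(1)$ for $i \ne j$. Consequently, if $\gamma_1$ and $\gamma_2$ lie in different blocks of this decomposition, then $|\S \cap (\S + \gamma_1 + \gamma_2)| = O(1)$, so this ``good'' event contributes nothing to $\Pr[|\S \cap (\S + \gamma_1 + \gamma_2)| \ge 2^{k+2}]$ once $k$ is above a small absolute constant. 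The probability of the complementary ``same-block'' event is $\sum_{i=1}^{2^{d-1}} (|V_i^\perp + e_i|/|\S|)^2 \approx 1/2^{d-1} = 2 \cdot 2^{-d}$, which supplies the $2^{1-d}$ term; the residual $2^{-k}$ slack accounts for the sub-event $\gamma_1 = \gamma_2$ (contributing $1/|\S|$ to $\Pr[\cdot \ge 2^{k+2}]$ at the peak value $r_\S(0) = |\S|$) and for finer within-block collisions controlled via \cref{lemma:random-f}(d).

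The main obstacle is the precise matching of parameters. Unlike \cref{thm:preciseGreedy}, where only the worst-case value of $|\S \cap (\S + \gamma)|$ mattered, here the full tail distribution of this quantity must be controlled: the number of blocks must be exactly $2^{d-1}$ to produce the $2^{1-d}$ factor, and the subspace dimension must be close to $m/2$ so cross-block intersections are $O(1)$ rather than scaling with $|\S|$. A secondary difficulty is bounding the contributions to $\S$ (and thereby to the probability estimate) from the ``unshifted'' component $\bigcup_i V_i^\perp \times \{0\}$ of the Fourier support arising from the $+1$ fallback in the definition of $f$. I would handle these either by a straightforward extension of the case analysis from \cref{sec:one}, or, if unavoidably too large, by modifying the construction (for instance, replacing the $+1$ fallback with a carefully chosen boolean $g(x)$) so that this part is eliminated or negligible.
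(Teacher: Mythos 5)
Your approach is genuinely different from the paper's: you propose to push the subspace-addressing construction of \cref{ex:subspaceAddressing} into a parameter regime where cross-block intersections are $O(1)$, whereas the paper uses a completely different function, the full binary decision-tree function on $n=2^d-1$ variables, whose Fourier support has an explicit combinatorial description in terms of root-to-leaf paths. Unfortunately, your version has two gaps that are not merely technical.

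First, the random construction does not survive your parameter change. With subspaces of dimension $\lfloor m/2\rfloor$ in $\F_2^m$ (and $m$ odd), the probability that a fixed pair $V_i,V_j$ has $V_i\cap V_j\neq\{0\}$ is bounded below by a constant: the expected number of nonzero vectors in $V_i\cap V_j$ is roughly $2^{2\lfloor m/2\rfloor-m}=2^{-1}$, not exponentially small. The proof of \cref{lemma:random-f} relies on the gap $m-2\dim V_i=3k$ being large so the union bound over $\binom{2^k}{2}$ pairs succeeds; once the intersection dimension is pushed to $m-2\lfloor m/2\rfloor=1$ you have no slack left, and with $2^{d-1}$ subspaces the union bound fails catastrophically. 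The same problem hits items (b) and (d). So "a variant of \cref{lemma:random-f}" does not follow; you would need an explicit algebraic construction (e.g.\ a spread of subspaces) and would then lose the genericity you invoke for item (d).

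Second, the "unshifted" component $\bigcup_i V_i^\perp$ of the Fourier support is not a minor nuisance you can sweep into the $2^{-k}$ slack: it is comparable in size to $\bigcup_i(V_i^\perp+e_i)$, so a constant fraction of pairs $(\gamma_1,\gamma_2)$ land in the Case-2 regime ($|\gamma_y|=1$), where the paper's bound is $|V_i^\perp|=2^{\lceil m/2\rceil}$, i.e.\ exponentially larger than the threshold $2^{k+2}$. Until this component is eliminated or its contribution to the probability is shown to be $O(2^{-k}+2^{1-d})$, the argument does not close. You flag this yourself, but flagging it is not resolving it; the modification "replace the $+1$ fallback by some $g(x)$" would itself require justification that it does not reintroduce new large folding directions. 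By contrast, the paper's decision-tree function has no analogous "fallback" term, its support decomposes cleanly as $\S=\bigcup_i\{s:s\subseteq P_i,\ \N(i)\in s\}$, and the bound on $|(\S+\gamma_1)\cap(\S+\gamma_2)|$ reduces to counting subsets along two root-to-leaf paths controlled by the depth of their meeting point, with the probability estimate being the trivial $2^{-k}$ bound on two random leaves sharing an ancestor at depth $\ge k$.
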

In our construction, $|\S| = \poly(n)$, which is the primary regime of interest. For larger $\S$, say of size $|\S| = \exp(n^c)$ for some constant $c > 0$, the log-rank conjecture is trivially true, since $n < \polylog(|\S|)$.

Let $T$ be a full binary decision tree of depth $d$. There are $n = 2^d-1$ internal nodes indexed by $[2^d-1]$, where we query (distinct) $x_i$ at node $i$. Each of the largest depth internal nodes $v$ is adjacent to two leaves: -1 and 1, corresponding to $v = 0$ and $v=1$, respectively. Let $f :\mathbb{F}_2^{n}\to \{-1,1\}$ be the resulting function. For example, the following decision tree corresponds to $f$ for $n=7$.
\begin{center}
\begin{forest}
for tree={
    grow=south,
    circle, draw, minimum size=3ex, inner sep=1pt,
    s sep=7mm
        }
[$x_1$
    [$x_2$, edge label={node[midway,above,font=\scriptsize]{0}}
        [$x_4$
            [-1]
            [1]
        ]
        [$x_5$
            [-1]
            [1]
        ]
    ]
    [$x_3$, edge label={node[midway,above,font=\scriptsize]{1   }}
        [$x_6$
            [-1]
            [1]
        ]
        [$x_7$
            [-1]
            [1]
        ]
    ]
]
\end{forest}
\end{center}

As we will soon show, the Fourier support of $f$ corresponds to (subsets of) paths down the tree, where $\left|(\S + \gamma_1) \cap (\S + \gamma_2)\right|$ is determined by the lowest common ancestor of the paths of $\gamma_1$ and $\gamma_2$. Since it is overwhelmingly likely the two paths will quickly diverge, we find $\left|(\S + \gamma_1) \cap (\S + \gamma_2)\right|$ is typically small. 

Suppose the leaves are indexed by $[2^d]$.
Then $f$ can be written as 
\begin{equation}\label{fsum}
    \sum_{i\in [2^d]}\sign(L_i)\cdot \mathbbm{1}_{L_i},
\end{equation}
where $\mathbbm{1}_{L_i}$ denotes the indicator function of the inputs that result in leaf $i$, and $\sign(L_i)\in \{-1,1\}$ is the output at leaf $i$. Let $P_i$ be the ordered set of coordinates that are queried to reach the leaf $i$. Then for input $x = (x_1,\ldots,x_n)\in \mathbb{F}_2^n$, we can write 
\[
    \mathbbm{1}_{L_i}(x) = \prod_{t\in P_i} \left(\frac{1+ (-1)^{a_t + x_t}}{2}\right) = \frac{1}{2^d}\left(\sum_{P\subseteq P_i}(-1)^{\sum_{j \in P}a_j} \cdot (-1)^{\sum_{j \in P}x_j}\right),
\]
where $a_t\in \mathbb{F}_2$ is the output of node $t$ on the path $P_i$.

To find the Fourier support $\S=\supp(\widehat{f})$, it remains to determine which terms ``survive'' cancellation in Equation \eqref{fsum}. Let $\N(i)$ be the index of the internal node adjacent to leaf $i$. Observe that when $\N(i) = \N(j)$ for $i \ne j$ (so $\sign(L_i) = -\sign(L_j)$),
\begin{align*}
    2^d (\sign(L_i)\cdot \mathbbm{1}_{L_i}(x) + \sign(L_j)\cdot \mathbbm{1}_{L_j}(x)) &= \sign(L_i) \sum_{P\subseteq P_i}(-1)^{\sum_{t \in P}a_t} \cdot (-1)^{\sum_{t \in P}x_t} \\ 
    &\qquad - \sign(L_i) \sum_{P\subseteq P_j}(-1)^{\sum_{t \in P}a_t} \cdot (-1)^{\sum_{t \in P}x_t} \\
    &= 2 \cdot \sign(L_i) \cdot \sum_{P \subseteq P_i \: : \: \N(i) \in P}(-1)^{\sum_{t \in P}a_t} \cdot (-1)^{\sum_{t \in P}x_t},
\end{align*}
 since $x_{\N(i)}$ is the only $x$ value that $P_i$ and $P_j$ disagree on.
That is, each term in $f$'s expansion must contain $\N(i)$ for some $i$. Moreover, once these cancellations are made, $\mathbbm{1}_{L_i}$ does not interact with $\mathbbm{1}_{L_j}$ for $\N(i) \ne \N(j)$, since no term can contain both $\N(i)$ and $\N(j)$. In summary,
\[
\S = \bigcup_{i \in [2^d]} \{s : s \subseteq P_i \text{ and } \N(i) \in s\}.
\]

Let $\gamma_1, \gamma_2 \in \S$. By our observation on the structure of $\S$, they have the form $\gamma_1 = \alpha_1 \dot\cup \{\N(i)\}$ and $\gamma_2 = \alpha_2 \dot\cup \{\N(j)\}$ for some $i,j \in [2^d]$.
We are interested in the number of pairs $(\beta_1, \beta_2) \in \S \times \S$ such that $\gamma_1 + \gamma_2 = \beta_1 + \beta_2$. It will suffice to focus on the setting $\N(i) \ne \N(j)$ since this occurs with overwhelming probability. In this case, the quantity $|(\S + \gamma_1) \cap (\S + \gamma_2)|$ depends only on the depth of the lowest common ancestor of $P_i$ and $P_j$. 
\begin{claim}
    If $|(\S + \gamma_1) \cap (\S + \gamma_2)| \ge 2^{k+2}$, then the lowest common ancestor of $P_i$ and $P_j$ is at depth at least $k$.
\end{claim}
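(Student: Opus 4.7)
The plan is to compute $|(\S+\gamma_1)\cap(\S+\gamma_2)|$ exactly as a function of $d'$, the depth of the LCA of $P_i$ and $P_j$, and read the claim off as an immediate numerical consequence. My first move is the standard bijection $z\mapsto (z+\gamma_1,z+\gamma_2)$, which identifies $(\S+\gamma_1)\cap(\S+\gamma_2)$ with the set of ordered pairs $(s_1,s_2)\in\S\times\S$ satisfying $s_1+s_2=\gamma_1+\gamma_2$. Using the structural description of $\S$ derived earlier in this section, I would partition $\S=\bigsqcup_v \S_v$ indexed by depth-$(d-1)$ internal nodes $v$, where $\S_v=\{s : v\in s\subseteq P_v\}$, and stratify contributing pairs by the pair $(v_1,v_2)$ with $s_1\in\S_{v_1}$ and $s_2\in\S_{v_2}$.

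The structural crux is identifying which $(v_1,v_2)$ can contribute. Since each of $\gamma_1,\gamma_2$ contains exactly one depth-$(d-1)$ node (namely $\N(i)$ and $\N(j)$ respectively) and these are distinct, the set of depth-$(d-1)$ elements of $\gamma_1+\gamma_2$ is exactly $\{\N(i),\N(j)\}$. This rules out $v_1=v_2$, since then $s_1+s_2\subseteq P_{v_1}$ contains at most one depth-$(d-1)$ element while $\gamma_1+\gamma_2$ has two; and in the case $v_1\neq v_2$ it forces $\{v_1,v_2\}=\{\N(i),\N(j)\}$, because $v_t\in s_t$ and $v_t\notin P_{v_{3-t}}$, so both $v_1$ and $v_2$ appear as depth-$(d-1)$ elements of $s_1+s_2=\gamma_1+\gamma_2$.

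For the ordering $(v_1,v_2)=(\N(i),\N(j))$ I would analyze the equation $s_1+s_2=\gamma_1+\gamma_2$ coordinate-by-coordinate. Coordinates outside $P_i\cup P_j$ are automatically absent since $\gamma_1+\gamma_2\subseteq P_i\cup P_j$; each coordinate in $P_i\setminus P_j$ must lie in $s_1$ exactly when it lies in $\gamma_1+\gamma_2$ (symmetrically for $P_j\setminus P_i$); and each coordinate in $P_i\cap P_j$ contributes an independent binary choice. The required memberships $\N(i)\in s_1$ and $\N(j)\in s_2$ come out for free since $\N(i)\in(P_i\setminus P_j)\cap(\gamma_1+\gamma_2)$ and symmetrically for $\N(j)$. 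This yields $2^{|P_i\cap P_j|}$ pairs, and doubling for the swap $(\N(j),\N(i))$ gives $|(\S+\gamma_1)\cap(\S+\gamma_2)|=2^{|P_i\cap P_j|+1}=2^{d'+2}$, using $|P_i\cap P_j|=d'+1$. The claim follows immediately, and the only delicate point is the structural lemma constraining $(v_1,v_2)$; the rest is routine bookkeeping.
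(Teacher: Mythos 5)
Your proposal is correct and follows essentially the same approach as the paper: view the intersection as ordered pairs $(s_1,s_2)\in\S\times\S$ with $s_1+s_2=\gamma_1+\gamma_2$, restrict the depth-$(d-1)$ nodes of $s_1,s_2$ to $\{\N(i),\N(j)\}$, and analyze coordinate-by-coordinate with two free choices per element of $P_i\cap P_j$ and forced choices elsewhere. The only (minor) differences are stylistic: the paper argues the contrapositive and settles for an upper bound of $2^{\ell+2}$, whereas you compute the exact count $2^{d'+2}$ and spell out the justification for $\{v_1,v_2\}=\{\N(i),\N(j)\}$ that the paper compresses into a ``without loss of generality.''
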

\begin{proof}
    We will show the contrapositive. Suppose the lowest common ancestor $a$ of $P_i$ and $P_j$ is at depth $\ell < k$, and suppose $\beta_1, \beta_2 \in \S$ satisfy $\beta_1 + \gamma_1 = \beta_2 + \gamma_2$. Without loss of generality, assume $\N(i) \in \beta_1$ and $\N(j) \in \beta_2$. Then $\beta_1$ and $\beta_2$ must be a subset of the elements in the paths $P_i$ and $P_j$, respectively. 
    
    First, consider each element $E \in P_i \cap P_j$, which is all those above (and including) $a$. If $E \in \gamma_1 + \gamma_2$, then $E \in \beta_1 + \beta_2$ only if $E$ is in precisely one of $\beta_1, \beta_2$. Likewise, if $E \not\in \gamma_1 + \gamma_2$, then $E \not\in \beta_1 + \beta_2$ only if $E$ is in neither or both of $\beta_1, \beta_2$. In either case, we have two options for each $E$. 

    Now consider each element $E \in P_i$ below $a$. By assumption, $E \not\in P_j$. Thus, if $E \in \gamma_1 + \gamma_2$, it must be that $E \in \gamma_1$ and $E \not\in \gamma_2$. For $\beta_1 + \beta_2$ to contain $E$, we must likewise have $E \in \beta_1$ and $E \not\in \beta_2$. Similarly, if $E \not\in \gamma_1 + \gamma_2$, it cannot be in $\gamma_1$ or $\gamma_2$. Thus, it is not in $\beta_1$ or $\beta_2$ either. An identical argument for $E \in P_j$ shows that we only have one way to account for elements in the paths $P_i$ or $P_j$ below $a$.

    Doubling to compensate for the cases where $\N(j) \in \beta_1$ and $\N(i) \in \beta_2$, we find the number of options for $(\beta_1, \beta_2) \in \S \times \S$ such that $\beta_1+\gamma_1 = \beta_2+\gamma_2$ is at most $2^{\ell+2}<2^{k+2}$.
\end{proof}

\cref{thm:quantifiedMain} follows quickly from the claim. The probability that $P_i$ and $P_j$ have a common ancestor at depth at least $k$ is at most $2^{-k}$, so 
\begin{align*}
    &\quad\,\Pr_{\gamma_1, \gamma_2 \in \S} \left[\left|(\S + \gamma_1) \cap (\S + \gamma_2)\right| \ge 2^{k+2} \right] \\ 
    &\le \Pr_{\gamma_1, \gamma_2 \in \S} \left[\left|(\S + \gamma_1) \cap (\S + \gamma_2)\right| \ge 2^{k+2} \:\middle\vert \:\N(\gamma_1) \ne \N(\gamma_2)\right] + 2^{1-d} \\
    &\le 2^{-k} + 2^{1-d},
\end{align*}
where we overload notation by letting $\N(\gamma) = \N(i) \in \gamma$.

\section{Conclusion}\label{sec:conc}
While the provided functions rule out specific approaches, it is worth noting that neither are a counterexample to the log-rank conjecture. The subspace addressing function (\Cref{sec:one}) has a simple PDT: first individually query all $7k$ address bits, then query the bit to the corresponding subspace. Since the Fourier sparsity is at least $2^{6k}$, this is certainly affordable.
While this example refutes a general greedy approach, such an approach works for the decision tree function (\Cref{sec:many}). Each query of the root variable eliminates half the paths (and thus reduces the sparsity by two), so iterating this process quickly makes the function constant.

\section{Acknowledgements}
This work was done in part while the authors were visiting the Simons Institute for the Theory of Computing. A.O. thanks Daniel M. Kane for a number of helpful discussions. We also thank anonymous reviewers for useful comments on earlier versions of this manuscript.

\bibliographystyle{amsalpha}  
\bibliography{biblio} 

\end{document}